\numberwithin{equation}{section}
\newtheorem{Theorem}{Theorem}[section]
\newtheorem{Lemma}[Theorem]{Lemma}
\newtheorem{Proposition}[Theorem]{Proposition}
\newtheorem{Observation}[Theorem]{Observation}
 { \theoremstyle{definition}
\newtheorem{Definition}[Theorem]{Definition}
\newtheorem{Example}[Theorem]{Example}
\newtheorem{Remark}[Theorem]{Remark} }
\begin{document}

\newcommand{\arXivNumber}{2010.03638}

\renewcommand{\PaperNumber}{015}

\FirstPageHeading

\ShortArticleName{St\"ackel Equivalence of Non-Degenerate Superintegrable Systems, and Invariant Quadrics}

\ArticleName{St\"ackel Equivalence of Non-Degenerate\\ Superintegrable Systems, and Invariant Quadrics}

\Author{Andreas VOLLMER~$^{\dag\ddag}$}

\AuthorNameForHeading{A.~Vollmer}

\Address{$^\dag$~Institute of Geometry and Topology, University of Stuttgart, 70550 Stuttgart, Germany}
\Address{$^\ddag$~Dipartimento di Scienze Matematiche (DISMA), Politecnico di Torino,\\
\hphantom{$^\ddag$}~Corso Duca degli Abruzzi, 24, 10129 Torino, Italy}
\EmailD{\href{mailto:andreas.vollmer@polito.it}{andreas.vollmer@polito.it}, \href{mailto:andreas.d.vollmer@gmail.com}{andreas.d.vollmer@gmail.com}}

\ArticleDates{Received October 09, 2020, in final form February 02, 2021; Published online February~17, 2021}

\Abstract{A non-degenerate second-order maximally conformally superintegrable system in dimension~2 naturally gives rise to a quadric with position dependent coefficients. It is shown how the system's St\"ackel class can be obtained from this associated quadric.
The St\"ackel class of a second-order maximally conformally superintegrable system is its equivalence class under St\"ackel transformations, i.e., under coupling-constant metamorphosis.}

\Keywords{St\"ackel equivalence; quadrics; superintegrable systems}

\Classification{14H70; 70H06; 30F45}

\section{Introduction}
Superintegrable systems in dimension~2 are a classical subject of mathematical research. They have interrelations, for instance, with special functions \cite{GIVZ2013, KMP13}, quadratic algebras \cite{Daskaloyannis2001,Kress07,Post2011}, degenerate Poisson pencils~\cite{MPT2011}, and constant divisors in the context of the Riemann--Roch theorem \cite{Tsiganov2020}, see also \cite{Tsiganov1999,Tsiganov2019}.
In particular, second-order (maximally) superintegrable systems in dimension 2 (2D) are classified \cite{Kalnins&Kress&Miller-I,Kalnins&Kress&Miller-II, Kalnins&Kress&Pogosyan&Miller,KPM2002}, for Euclidean 2-space even algebraic geometrically~\cite{Kalnins&Kress&Miller,Kress&Schoebel}.
In this context, superintegrable systems are considered up to isometries, i.e., locally up to coordinate transformations.

Another viewpoint is to consider second-order superintegrable systems up to St\"ackel (i.e., conformal) transformations~\cite{BKM1986}, also known as coupling-constant metamorphism~\cite{HGDR84}. This viewpoint appears, for instance, in \cite{Capel_phdthesis,DY06,Kress07,KSV_2020}. The equivalence classes resulting from this identification are called \emph{St\"ackel classes}.
The equivalence classes of non-degenerate second-order superintegrable systems under St\"ackel transforms are classified \cite{DY06,Kress07}. Specifically, in~\cite{Kress07} a method is developed that allows one to identify the St\"ackel class of a non-degenerate superintegrable system from the properties of its associated quadratic algebra, see also~\cite{Post2011}.
The present paper presents an alternative method to determine the St\"ackel class of a given non-degenerate second-order 2D superintegrable system. It also applies to conformally superintegrable systems.
This new method, presented below in Theorem~\ref{thm:classes}, is applicable to 2D non-degenerate second-order superintegrable systems of any curvature. Specifically we do not require constant sectional curvature.

On the other hand, we shall see that the proof for this new method is extremely simple, since we do not have to start our proof ``from scratch''. Instead, we shall exploit the \emph{known} flat second-order systems that realise the \emph{known} St\"ackel classes. With this in mind, the purpose of this paper is not to present a tedious, complicated proof, but rather to report an efficient tool that determines which St\"ackel class a given non-degenerate second-order 2D (conformally) superintegrable system belongs to.

\section{Preliminaries}

Let $g$ be a (pseudo-)Riemannian metric on a $2$-dimensional manifold $M$ and consider the Hamiltonian $H(x,p)=g^{ij}(x)p_ip_j+V(x)$. Here $x$ and $p$ stand, respectively, for position coordinates $x^i$ and canonical momenta (fibre coordinates) $p_i$ on the cotangent space $T^*M$.
Note that in what follows we shall consider two Hamiltonians $H_1$, $H_2$ to be equal if they are constant multiples as functions on $T^*M$.

A \emph{second-order integral} (of motion) for $H$ is a function $F(x,p)=K^{ij}(x)p_ip_j+W(x)$ such that~$H$ and~$F$ commute w.r.t.\ the canonical Poisson bracket on $M$ (Einstein's summation convention applies),
\begin{equation}\label{eq:integral}
	\{ F,H \}
	= \frac{\partial F}{\partial x^i}\frac{\partial H}{\partial p_i}
	- \frac{\partial H}{\partial x^i}\frac{\partial F}{\partial p_i}
	= 0 .
\end{equation}
We remark that condition~\eqref{eq:integral} is equivalent to requiring that the coefficients $K_{ij}$ (indices are lowered using $g$) form the components of a Killing tensor $K$ and that $K$ is compatible with $V$ according to the \emph{Bertrand--Darboux condition} \cite{bertrand_1857,darboux_1901}
\begin{equation}\label{eq:dKdV}
	\nabla_{[i}K\indices{_{j]}^a}\nabla_aV = K\indices{^a_{[i}}\nabla^2_{j]a}V,
\end{equation}
where square brackets denote antisymmetrisation; $\nabla$ is the Levi-Civita connection for $g$ and $\nabla^2$ denotes the Hessian.
Equation~\eqref{eq:dKdV} is the integrability condition for $W$; it is obtained as follows: First solve the homogeneous linear component (in momenta) of the polynomial equation~\eqref{eq:integral} for the differential~${\rm d}W$, obtaining $\nabla_aW=K_{ab}\nabla^bV$. Then take a further derivative; the integrability condition for~$W$,
i.e.~${\rm dd}W=0$, is~\eqref{eq:dKdV}.

\begin{Remark}Note that we work over the field $\mathbb{C}$ of complex numbers unless otherwise indicated.
\end{Remark}

\begin{Definition}
A (2D maximally) \emph{second-order superintegrable system} is a Hamiltonian $H$ together with its space of second-order integrals \eqref{eq:integral}, which are required to contain two elements~$F_1$ and~$F_2$ such that $(H,F_1,F_2)$ are functionally independent.
\end{Definition}

\begin{Remark}	Since from the knowledge of the superintegrable Hamiltonian we can reconstruct the system, we typically specify only the Hamiltonian.
\end{Remark}

Within the theory of superintegrable systems, \emph{non-degeneracy} is a well-known natural property \cite{Kalnins&Kress&Miller-I,KKM_2018,Kress&Schoebel}. In Definition~\ref{def:non-degeneracy}, we complement the standard definition~(i) \cite{Kalnins&Kress&Pogosyan&Miller,KPM2002,Kress&Schoebel&Vollmer} by a~helpful distinction in~(ii).

\begin{Definition}\label{def:non-degeneracy}\quad
\begin{enumerate}\itemsep=0pt
\item[(i)] A second-order superintegrable system in dimension~$n$ is \emph{non-degenerate} if equation~\eqref{eq:dKdV} admits a $(n+2)$-dimensional space $\mathcal{U}$ of solutions (for~$V$).
	
\item[(ii)] We can consider the Hamiltonian with a fixed potential $V\in\mathcal{U}$ or with the full $(n+2)$-parameter family. In the latter case we write~$V^{\mathcal{U}}$ for clarity. Note that~$V^{\mathcal{U}}$ denotes a~specific parametrisation of~$\mathcal{U}$.
\end{enumerate}
\end{Definition}

\begin{Remark}
	It is in place to remark on the system [E15] of \cite{Kalnins&Kress&Pogosyan&Miller}, which appears alongside the non-degenerate systems in \cite{Kress07}.
	The reader will notice that the system [E15] does not appear in Table~\ref{tab:classes}, while it appears in the corresponding table of \cite{Kress07}. This is correct, and a closer look confirms that the system [E15] does not meet the prerequisites of Definition~\ref{def:non-degeneracy}.
	Indeed, the potential of [E15] depends on a functional parameter, inconsistent with non-degeneracy. Note that the system [E15] is also not included in the algebraic-geometric classification of non-degenerate second-order superintegrable systems on flat 2-space, see \cite{Kress&Schoebel}.
\end{Remark}

The following equivalence relation of second-order maximally superintegrable systems is well known.

\begin{Definition}\label{def:staeckel}
	Let $\langle H,F_i\rangle$ be a second-order non-degenerate superintegrable system with the Hamiltonian~$H$ and integrals $F_i$.
	Let $U\in\mathcal{U}$ be one of its compatible potentials.
	Then $\big\langle\tilde H,\tilde F_i\big\rangle$ with
	\begin{equation}\label{eq:staeckel}
		\tilde H=U^{-1}H ,\qquad
		\tilde F_i = F_i+(1-W_i)U^{-1}H
	\end{equation}
	is called the \emph{St\"ackel transform} of $\langle H,F_i\rangle$.
\end{Definition}

Often, since $U\in\mathcal{U}$, lives in a linear space of admissible potentials, the St\"ackel transform is considered as a transformation involving a coupling parameter $\alpha$, i.e., the Hamiltonian $H_\alpha=H_\alpha(x,y)=H+\alpha U$ is taken as dependent on a linear parameter $\alpha$. Then, St\"ackel transform can be interpreted as a change of the Hamiltonian such that the roles of the energy and of the coupling parameter are interchanged \cite{Kalnins&Kress&Miller-II,Post10}. A multi-parameter version of this approach to St\"ackel transforms has been developed by Sergyeyev and Błaszak \cite{Sergyeyev_2008}, see also \cite{Blaszak_2012,Blaszak_2017}. In our definition, the coupling parameters are naturally geometrically replaced by the freedom to choose another element $U$ from the linear space $\mathcal{U}$.

\begin{Remark}\quad
\begin{enumerate}\itemsep=0pt
\item[(i)] The integrals $\tilde F_i$ are indeed integrals for $\tilde H$ and, provided they are functionally independent, the St\"ackel transform is again a second-order superintegrable system.
\item[(ii)] In the context of second-order superintegrable systems, the \emph{St\"ackel transform} is also known under the name \emph{coupling constant metamorphosis} \cite{BKM1986,HGDR84}, but these two concepts are not the same in other contexts~\cite{Post10}.
\end{enumerate}
\end{Remark}

\begin{Definition}\quad
\begin{enumerate}\itemsep=0pt
\item[(i)] We say that the non-degenerate second-order superintegrable systems $\langle H,F_i\rangle$ and $\big\langle \tilde H,\tilde F_i\big\rangle$ are \emph{St\"ackel equivalent} if they are St\"ackel transforms.
\item[(ii)] The equivalence class of a non-degenerate second-order superintegrable system under St\"ackel equivalence is called its \emph{St\"ackel class}~$\mathcal{S}$.
\end{enumerate}
\end{Definition}

Note that St\"ackel transform is ``conformal'' in the following sense: For two systems $h,\tilde h\in\mathcal{S}$, belonging to the same St\"ackel class, the underlying metrics $g_h$ and $g_{\tilde h}$ are conformally equivalent.

\section{Method}
The aim of the current section is to construct a certain variety $\mathcal{Q}$ that is invariant under St\"ackel transform. It is encoded in a quadric, for a given (non-degenerate) 2D second-order superintegrable system. In the following section we shall relate this variety to the set of all flat realisations of a given St\"ackel class, i.e., those non-degenerate 2D second-order superintegrable systems inside the given St\"ackel class that are realised on a flat geometry.
In order to prevent misunderstandings, it is worthwhile to state a clarification: The present paper seeks a method that determines the St\"ackel class of \emph{any} given 2-dimensional non-degenerate second-order superintegrable system, i.e., for manifolds of arbitrary (including non-constant) curvature. We base this method on an invariant variety~$\mathcal{Q}$ inside an invariant projective space~$\mathcal{W}$.
The variety~$\mathcal{Q}$ is defined by requiring flatness for objects in $\mathcal{W}$. We would like to highlight that this does not imply any restriction to flat manifolds as far as Theorem~\ref{thm:classes}, and Theorem~\ref{thm:conformal}, are concerned. It is merely a technical trick, i.e., a good choice, which enables us to reduce the problem under consideration to a simple computation.

\begin{Definition}
A \emph{quadric} in projective space $\mathbb{P}^m$ is the subset defined by the zero set of a~homogeneous quadratic polynomial equation in $m+1$ variables.
\end{Definition}

Note that we do not require the polynomial equation to be irreducible.
The following observation is the basis for the technique developed further below.
\begin{Observation}\label{obs:main}\looseness=-1
	Consider two St\"ackel equivalent, non-degenerate second-order superintegrable Hamiltonians $H=g^{ij}p_ip_j+V$
	and $\tilde H=\tilde g^{ij}p_ip_j+\tilde V$.
	By Definition~{\rm \ref{def:non-degeneracy}}, they give rise to $(n+2)$-dimensional spaces $\mathcal{U}$ and $\tilde{\mathcal{U}}$ solving~\eqref{eq:dKdV}, respectively.
	We have, according to Definition~{\rm \ref{def:staeckel}}, that
	\[
	\mathcal{U}g = \frac{\mathcal{U}}{U} Ug = \tilde{\mathcal{U}}\tilde g,
	\]
	where $U$ is a solution of~\eqref{eq:dKdV} for the superintegrable system arising from~$H$. This is true in any dimension $n\geq2$.
As a result, for a non-degenerate second-order superintegrable Hamiltonian~$H$, the $(n+2)$-dimensional linear space $\mathcal{V}$,
	\begin{equation*}
		\mathcal{V} = \mathcal{U}g ,
	\end{equation*}
	is invariant under St\"ackel transformations.
\end{Observation}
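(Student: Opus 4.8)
The plan is to justify the displayed chain of equalities, from which the invariance of $\mathcal{V}$ is immediate. The first equality $\mathcal{U}g = \frac{\mathcal{U}}{U}\,Ug$ is the trivial cancellation $U^{-1}\cdot U = 1$; the content sits in the second equality $\frac{\mathcal{U}}{U}\,Ug = \tilde{\mathcal{U}}\tilde g$, which packages two facts about the St\"ackel transform $\tilde H = U^{-1}H$ of Definition~\ref{def:staeckel}. Writing $H$ with its full potential family, $H = g^{ij}p_ip_j + V^{\mathcal{U}}$, one has (away from the zero locus of $U$) $\tilde H = U^{-1}H = \big(U^{-1}g^{ij}\big)p_ip_j + U^{-1}V^{\mathcal{U}}$, and I would read off both facts directly from this identity: (i) the cometric of $\tilde H$ is $U^{-1}g^{ij}$, hence the metric rescales to $\tilde g = Ug$; and (ii) the potential family of $\tilde H$ is $U^{-1}V^{\mathcal{U}}$, i.e.\ $\tilde{\mathcal{U}} = U^{-1}\mathcal{U}$ as linear spaces of functions.

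Fact (i) is direct. For (ii), the identity above gives the inclusion $U^{-1}\mathcal{U}\subseteq\tilde{\mathcal{U}}$ once we invoke the standard fact --- recalled in the Remark following Definition~\ref{def:staeckel} --- that a St\"ackel transform of a non-degenerate second-order superintegrable system is again one: thus each $U^{-1}V$ with $V\in\mathcal{U}$ solves the Bertrand--Darboux condition~\eqref{eq:dKdV} for $\tilde g$, together with the Killing tensors underlying the transformed integrals $\tilde F_i$. For the reverse inclusion one can argue either by invertibility of the St\"ackel transform (its inverse is the St\"ackel transform of $\tilde H$ with respect to the potential $U^{-1}\in\tilde{\mathcal{U}}$, which returns $H$), or by dimension: $V\mapsto U^{-1}V$ is a linear isomorphism on functions, so $\dim\big(U^{-1}\mathcal{U}\big) = \dim\mathcal{U} = n+2$, while $\tilde{\mathcal{U}}$ is $(n+2)$-dimensional by the standing non-degeneracy hypothesis; either way $\tilde{\mathcal{U}} = U^{-1}\mathcal{U}$.

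Combining (i) and (ii) gives $\tilde{\mathcal{V}} = \tilde{\mathcal{U}}\tilde g = \big(U^{-1}\mathcal{U}\big)(Ug) = \mathcal{U}g = \mathcal{V}$, the choice of auxiliary potential $U$ having cancelled. Since by Definition~\ref{def:staeckel} every system St\"ackel equivalent to $H$ is such a transform of it, the space $\mathcal{V} = \mathcal{U}g$ depends only on the St\"ackel class, which is the assertion. The only step carrying genuine (but entirely standard) content is (ii), the statement that coupling-constant metamorphosis rescales the space of compatible potentials by the \emph{inverse} conformal factor; granting that, everything else is the one-line cancellation above, exactly as the displayed chain advertises.
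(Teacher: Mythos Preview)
Your proposal is correct and matches the paper's approach: the paper presents this as an \emph{Observation} with no separate proof, merely citing Definition~\ref{def:staeckel} for the displayed chain $\mathcal{U}g = \tfrac{\mathcal{U}}{U}\,Ug = \tilde{\mathcal{U}}\tilde g$. You have simply unpacked that citation into the two identifications $\tilde g = Ug$ and $\tilde{\mathcal{U}} = U^{-1}\mathcal{U}$ (the latter via the dimension/invertibility argument), which is exactly the tacit content the paper leaves to the reader.
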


We comment that an analogous observation holds for conformally superintegrable Hamiltonians, but then $U$ can be any function (see details in Section~\ref{sec:conformal}).

Elements $g\in\mathcal{V}$ are typically metrics, since $\det(Ug)\ne0$ for $U\ne0$. Clearly the origin never corresponds to a metric. We observe that constant multiples of $U$ give rise to proportional Hamiltonians $H=\frac{1}{U}\big(g^{ij}p_ip_j+V^\mathcal{U}\big)$. It is therefore useful to reconsider the $(n+2)$-dimensional linear space $\mathcal{V}$ as an $(n+1)$-dimensional projective space, which we denote by
\[
\mathcal{W} = (\mathcal{V}\setminus\{0\})/{\sim},
\]
where $h\sim k$ for $h,k\in\mathcal{V}$ if $h=ak$ for constant $a\ne0$.

\begin{Remark}Note that while we shall perform the computations using a specific parametrisation $V^{\mathcal{U}}$ of the potential and particular coordinates on $M$, the space $\mathcal{W}$ is indeed a geometric object. It is independent of the specific parametrisation and the choice of coordinates.
\end{Remark}

\begin{Example}
	We emphasize that elements $q\in\mathcal{W}$ are (classes of) symmetric 2-tensors (and in fact metrics).
	For instance, take the 2-sphere with the round metric
	\begin{equation}\label{eqn:S4.metric}
		g = \frac{{\rm d}x{\rm d}y}{(xy+4)^2} .
	\end{equation}
	For the Hamiltonian we take that of the system [S4] in~\cite{Kalnins&Kress&Pogosyan&Miller}, i.e.,
	\begin{equation}\label{eqn:S4.hamilton}
		H = (xy+4)^2 p_xp_y
		+ a_1\frac{(xy+4)^2}{y^2}
		- a_2 \frac{xy-4}{\sqrt{xy}}
		-a_3 \frac{(xy+4)^2}{y\sqrt{xy}} +a_4 .
	\end{equation}
	It is then straightforward to compute
	\begin{equation}\label{eqn:S4.W}
		\mathcal{W} = \left\{
		\left(
		a_1\frac{1}{y^2}
		- a_2 \frac{xy-4}{(xy+4)^2\sqrt{xy}}
		-a_3 \frac{1}{y\sqrt{xy}} +\frac{a_4}{(xy+4)^2}
		\right){\rm d}x{\rm d}y
		\right\}/{\sim},
	\end{equation}
	where $\sim$ denotes the identification\footnote{In order not to overload notation, we tacitly adopt the convention that $(A/{\sim})=(A\setminus \{0\})/{\sim}$.}
	$g_1\sim g_2\Leftrightarrow g_1=kg_2$, $k\ne0$.
	The elements of $\mathcal{W}$ are (almost everywhere) non-singular 2-tensors up to multiplication by a non-zero constant.
\end{Example}

We shall now define a key object of the present paper -- the subvariety $\mathcal{Q}$ within $\mathcal{W}$ whose elements have vanishing Riemann curvature,
\[
\mathcal{Q} = \{ q\in\mathcal{W}\colon \mathrm{Riem}(q)=0 \} .
\]
By construction, this is invariant under St\"ackel transformations.
\begin{Remark}\quad
\begin{enumerate}\itemsep=0pt
\item[(i)] For elements $q\in\mathcal{W}$, the vanishing of their curvature tensor $\mathrm{Riem}(q)$ is indeed independent of the choice of representative for $q$.
\item[(ii)] Requiring the elements of $\mathcal{Q}$ to have vanishing curvature is a choice. Another choice would be, for instance, to consider the subset of elements $q\in\mathcal{W}$ such that the sectional curvature is constant,\footnote{However, note that requiring a specific constant would not be an invariant criterion.} i.e., elements~$q$ that live on the complex sphere. However, the (non-degenerate) St\"ackel classes (3,0), (0,11), (21,0) are shown in \cite{Kress07} to \emph{not} admit realisations on the complex 2-sphere. Therefore such a choice would not lead to an unambiguous characterisation of the St\"ackel classes considered here.
\end{enumerate}
\end{Remark}

The definition of $\mathcal{Q}$ is possible in any dimension $n$. We now restrict to $n=2$.
\begin{Lemma}In dimension~$2$, the variety $\mathcal{Q}\subset\mathcal{W}$ is defined by one homogeneous quadratic polynomial equation whose unknowns are the parameters of $V^{\mathcal{U}}$. The coefficients of this equation depend on the position $x\in M$ on the underlying manifold $M$.
\end{Lemma}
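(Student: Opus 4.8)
The plan is to exploit that in dimension~$2$ the curvature tensor is governed by a single scalar, and that every element of $\mathcal{W}$ is conformal to the fixed metric~$g$, so that the flatness condition turns into a conformal-curvature equation whose degree in the parameters of $V^{\mathcal{U}}$ can be read off directly.

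First, recall the dimension-$2$ identity $\mathrm{Riem}(q)_{ijkl}=K(q)\bigl(q_{ik}q_{jl}-q_{il}q_{jk}\bigr)$, with $K(q)$ the Gaussian curvature. Hence, for $q\in\mathcal{W}$, $\mathrm{Riem}(q)=0$ if and only if $K(q)=0$ at every point $x\in M$ (more precisely, on the open dense locus where $q$ is non-degenerate, which contains the domain of the superintegrable system). By Observation~\ref{obs:main}, every class in $\mathcal{W}$ has a representative of the form $q=Ug$ with $U\in\mathcal{U}$; that is, $q$ is conformally equivalent to~$g$. I would then use the classical transformation law of the Gaussian curvature under a conformal change in dimension~$2$,
\[
	K(Ug)=\frac{1}{U}\Bigl(K(g)-\tfrac12\,\Delta_g\ln U\Bigr),\qquad \Delta_g U:=g^{ij}\nabla^2_{ij}U .
\]
Substituting $\Delta_g\ln U=U^{-1}\Delta_g U-U^{-2}g^{ij}\partial_iU\,\partial_jU$ and clearing the factor $U^2$, which is not identically zero, turns $K(q)=0$ into the polynomial identity
\[
	U\,\Delta_g U-g^{ij}\partial_iU\,\partial_jU-2\,K(g)\,U^2=0 .
\]

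Now insert the parametrisation $U=\sum_{\alpha=1}^{4}a_\alpha U_\alpha$, where $\{U_\alpha\}$ is a basis of the $4$-dimensional linear space $\mathcal{U}$ and the $a_\alpha$ are the parameters of $V^{\mathcal{U}}$, which is linear and homogeneous in the $a_\alpha$. Each of the three terms is then manifestly a homogeneous quadratic form in $(a_\alpha)$; for instance $U\Delta_g U=\sum_{\alpha,\beta}a_\alpha a_\beta\,U_\alpha\Delta_g U_\beta$, and likewise for the other two, the coefficients being built algebraically from the components of $g$, its inverse, $K(g)$, and the first and second derivatives of $g$ and of the $U_\alpha$ — hence honest functions of $x$ on the domain of the system. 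Collecting the contributions (and symmetrising the first term) gives, for each fixed $x$, a single homogeneous quadratic equation $Q_x(a):=\sum_{\alpha,\beta}C_{\alpha\beta}(x)\,a_\alpha a_\beta=0$ with $x$-dependent coefficients; since $Q_x$ is homogeneous, its zero locus is well defined in $\mathcal{W}$, and $\mathcal{Q}$ is precisely the common zero set of this equation as $x$ ranges over $M$. This is the assertion.

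I do not expect a genuine obstacle; the points needing attention are (i) fixing the sign and normalisation conventions in the conformal curvature formula, and checking that clearing $U^{2}$ is harmless over $\mathbb{C}$ (the identity $Q_x(a)=0$ encodes $K(q)|_x=0$ exactly on the locus where $q$ is a metric, which is all that is required), and (ii) being clear about what uses $n=2$: the quadratic dependence on the parameters already follows from the conformal rescaling $q=Ug$ together with the linearity of $\mathcal{U}$, and persists in any dimension, whereas it is the reduction of $\mathrm{Riem}$ to the single scalar $K$ that makes flatness amount to exactly \emph{one} quadratic equation per point rather than a system.
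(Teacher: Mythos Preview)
Your argument is correct and follows essentially the same route as the paper: both reduce flatness in dimension~$2$ to the vanishing of a single scalar curvature and then observe that, since the element of $\mathcal{W}$ is $V^{\mathcal{U}}g$ with $V^{\mathcal{U}}$ linear in the parameters, the resulting condition is homogeneous quadratic in those parameters. The only cosmetic difference is that the paper picks isothermal coordinates $g=\phi^2\,{\rm d}x\,{\rm d}y$ and writes the curvature condition explicitly as $V^{\mathcal{U}}V^{\mathcal{U}}_{xy}\phi^2+2(V^{\mathcal{U}})^2\phi_{xy}\phi-V^{\mathcal{U}}_xV^{\mathcal{U}}_y\phi^2-2(V^{\mathcal{U}})^2\phi_x\phi_y=0$, whereas you phrase the same identity via the conformal change formula for~$K$.
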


\begin{proof}In 2D the Riemannian curvature tensor is determined by its (unique) sectional curvature or, alternatively, by its scalar curvature.
	Moreover, in suitable local coordinates, any 2D metric can be written as $g=\phi^2{\rm d}x{\rm d}y$, such that the requirement of vanishing Riemannian curvature becomes
	\begin{equation}\label{eq:quadric.equation}
		V^{\mathcal{U}} V^{\mathcal{U}}_{xy} \phi^2 + 2\big(V^{\mathcal{U}}\big)^2 \phi_{xy}\phi - V^{\mathcal{U}}_xV^{\mathcal{U}}_y\phi^2 - 2\big(V^{\mathcal{U}}\big)^2 \phi_x\phi_y = 0 ,
	\end{equation}
	where we recall that $V^{\mathcal{U}}$ is a parametrisation of $\mathcal{U}$; the subscripts $x$, $y$ denote usual derivatives. Therefore~\eqref{eq:quadric.equation} is homogeneously quadratic in the $n+2=4$ parameters of the potential~$V^{\mathcal{U}}$, with coefficients depending on the position.
\end{proof}

\begin{Example}
	For the Hamiltonian of the Harmonic Oscillator,
	\[
	H = p_xp_y + a_3 xy +a_2 (x+y) + a_1 (x-y) +a_0 ,
	\]
	we find
	\[
	\mathcal{Q} = \big\{
	\big(a_3 xy +a_2 (x+y) + a_1 (x-y) +a_0
	\big) {\rm d}x{\rm d}y\colon  a_1^2-a_2^2=a_0a_3	\big\} .
	\]
	This is special since the condition on the parameters $a_i$ does not depend on the position $(x,y)$. In general the restriction will depend on the position (see below).
\end{Example}

While here the main concern is about \emph{non-degenerate} second-order 2D superintegrable systems, it may be pointed out that our definitions are not restricted to the non-degenerate setting.

\begin{Example}The Kepler--Coulomb system has the Hamiltonian $H=p_xp_y+\frac{a_1}{xy}+a_0$.
	We find, where $a_0 xy + a_1\ne0$, that
	\[
	\mathcal{F} = \left\{ \left(\frac{a_1}{xy}+a_0\right) {\rm d}x{\rm d}y\colon a_0a_1=0 \right\}/{\sim}
	\]
	consisting of two distinct projective points.
	Note however that Theorem~\ref{thm:classes} below does not apply to such degenerate systems (and neither does~\cite{Kress07}, which however contains one exceptional case).
\end{Example}

\section{Results}
We implement the method set out in the previous section for all cases of the classification \cite{Kalnins&Kress&Pogosyan&Miller}, and conclude that $\mathcal{Q}$ carries enough information to identify the St\"ackel class of a non-degenerate system.
For this purpose we introduce the following geometric object:
\begin{Definition}
	Let $g$ be a 2-dimensional metric (of arbitrary curvature) that admits a non-degenerate second-order superintegrable potential $V$.
	Denote by $\mathcal{S}=\mathcal{S}(g,V)$ the St\"ackel class of the system defined by the Hamiltonian $H=g^{ij}p_ip_j+V$, i.e., the set of all (non-degenerate second-order) superintegrable systems $h\in\mathcal{S}$ equivalent to the initial one under St\"ackel transform. We call
	\[
	\mathcal{F}(\mathcal{S}) = \{ h\in\mathcal{S}\colon \mathrm{Riem}(g_h)=0 \}/{\sim}
	\]
	the space of \emph{flat realisations} of~$\mathcal{S}$. Here~$g_h$ is the metric underlying $h\in\mathcal{S}$, and $\sim$ stands for identification under multiplication with a non-zero constant factor (i.e., we work projectively). Note that $g_h$ is, in particular, a metric conformally equivalent to the initial metric~$g$.
\end{Definition}

As mentioned earlier, the flat realisations of any (non-degenerate) St\"ackel class $\mathcal{S}$ are known from~\cite{Kress07}.
\begin{Lemma}\label{la:F.is.intersect.Q}
	The space $\mathcal{F}(\mathcal{S})$ is isomorphic to the intersection $\bigcap_N\mathcal{Q}$ over a neighborhood $N\subset M$.
\end{Lemma}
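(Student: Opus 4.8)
The plan is to set up the correspondence between flat realisations of $\mathcal{S}$ and points of the invariant variety $\mathcal{Q}$ by carefully unwinding the definitions of $\mathcal{W}$, $\mathcal{Q}$, and $\mathcal{F}(\mathcal{S})$. First I would recall that by Observation~\ref{obs:main} the linear space $\mathcal{V}=\mathcal{U}g$, and hence its projectivisation $\mathcal{W}$, is intrinsically attached to the St\"ackel class $\mathcal{S}$ rather than to the individual representative $H$: any St\"ackel transform $\tilde H = U^{-1}H$ produces the same $\mathcal{V}$, since $\tilde{\mathcal{U}}\tilde g = (\mathcal{U}/U)(Ug) = \mathcal{U}g$. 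Thus every element $q\in\mathcal{W}$ is (a class of) a symmetric $2$-tensor of the form $U g$ for some $U\in\mathcal{U}$, and by the non-degeneracy assumption such a $q$ is a metric on a dense open subset of $M$; moreover the Hamiltonian $q^{-1}\bigl(g^{ij}p_ip_j + V^{\mathcal{U}}\bigr)$ it defines is precisely the St\"ackel transform of $H$ by $U$, hence lies in $\mathcal{S}$.

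Next I would exhibit the two maps giving the isomorphism. Given a flat realisation $h\in\mathcal{F}(\mathcal{S})$, its underlying metric $g_h$ is conformally equivalent to $g$ (this is the ``conformal'' property of St\"ackel transforms noted after the definition of St\"ackel class), so $g_h = U g$ for a unique-up-to-scale $U\in\mathcal{U}$; since $\mathrm{Riem}(g_h)=0$, the corresponding class $q=[g_h]\in\mathcal{W}$ satisfies $\mathrm{Riem}(q)=0$, i.e.\ $q\in\mathcal{Q}$. Conversely, a point $q\in\mathcal{Q}$ is represented by a flat metric of the form $Ug$, and the associated Hamiltonian $h_q = q^{-1}\bigl(g^{ij}p_ip_j+V^{\mathcal{U}}\bigr)$ is a non-degenerate second-order superintegrable system in $\mathcal{S}$ whose underlying metric has vanishing curvature, hence $h_q\in\mathcal{F}(\mathcal{S})$. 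These two assignments are manifestly inverse to each other and compatible with the identification $\sim$, so they furnish the desired isomorphism $\mathcal{F}(\mathcal{S})\cong\mathcal{Q}$.

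The remaining point is the appearance of the intersection $\bigcap_N\mathcal{Q}$ over a neighborhood $N\subset M$. Here I would observe that $\mathcal{Q}$, as computed via the Lemma, is cut out inside the projective space $\mathcal{W}$ by the single quadratic equation~\eqref{eq:quadric.equation}, whose coefficients depend on the point $x\in M$; a priori this yields a different quadric $\mathcal{Q}_x$ for each $x$. The curvature of a fixed $2$-tensor $Ug$ is, however, a single function on $M$, so requiring $\mathrm{Riem}(Ug)=0$ at one point is generically weaker than requiring it on all of $M$; the genuinely flat realisations correspond to those $q$ that satisfy~\eqref{eq:quadric.equation} \emph{identically} in a neighborhood, which is exactly $\bigcap_{x\in N}\mathcal{Q}_x$. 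I would justify passing from ``all of $M$'' to ``a neighborhood $N$'' by noting that flatness of a real-analytic (indeed, in our setting, rational) metric is determined on any open set by analytic continuation, so the intersection stabilises. The main obstacle, and the only step requiring genuine care, is this last one: ensuring that $\mathcal{F}(\mathcal{S})$ is matched not with any single fibre $\mathcal{Q}_x$ but precisely with the stable intersection, and verifying that the intersection is already attained over an arbitrarily small $N$ (so that the statement is local and well-posed). Everything else is a direct translation of definitions, which is consistent with the paper's stated aim of keeping the proof ``extremely simple''.
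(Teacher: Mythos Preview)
Your proposal is correct, but it takes a genuinely different route from the paper's own proof.

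The paper does not argue abstractly. Instead, it invokes the known classification of flat non-degenerate 2D systems: every flat realisation of any St\"ackel class can, after a coordinate change, be put into one of the normal forms $[\mathrm{E}m]$ from~\cite{Kalnins&Kress&Pogosyan&Miller}. The paper then lists all twelve Hamiltonians $H_m$ explicitly, and for each observes that the left-hand side of~\eqref{eq:quadric.equation} expands as a finite sum $\sum_j \beta_j f_j(x,y)$ where the $f_j$ are almost-everywhere linearly independent functions of position and the $\beta_j$ are quadratic polynomials in the parameters $a_0,\dots,a_3$. Linear independence of the $f_j$ then forces: the quadric vanishes on a neighborhood $N$ if and only if all $\beta_j=0$, which is a position-free condition and hence equal to~$\mathcal{F}(\mathcal{S})$. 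So the paper's argument is a case-by-case verification leaning on the classification, exactly in the spirit of the Introduction's remark that the proof is ``extremely simple'' because it exploits the \emph{known} flat systems rather than starting from scratch.

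Your argument, by contrast, is classification-free: you set up the bijection $\mathcal{F}(\mathcal{S})\leftrightarrow\{q\in\mathcal{W}:\mathrm{Riem}(q)=0\text{ identically}\}$ directly from Observation~\ref{obs:main} and the definition of St\"ackel transform, and then reduce ``identically'' to ``on a neighborhood $N$'' via analytic continuation. This is cleaner and more conceptual, and it would in principle transfer to settings where no explicit list of normal forms is available. The trade-off is that your analyticity step is asserted rather than proved: you should say why the elements of $\mathcal{U}g$ are real-analytic (this follows because the compatible potentials of a non-degenerate system satisfy a closed linear PDE system with analytic coefficients, hence are analytic; or, less intrinsically, because every 2D system is St\"ackel equivalent to one on a constant-curvature space). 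The paper's computational route sidesteps this entirely, and has the bonus of exhibiting $\bigcap_N\mathcal{Q}$ explicitly as the vanishing locus of the $\beta_j$, which is exactly what is tabulated in Table~\ref{tab:classes}.
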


Note: We do not claim that the finding of Lemma~\ref{la:F.is.intersect.Q} is new. Implicitly the statement is found in the literature, e.g., \cite{Kalnins&Kress&Miller-II,Post10}. The author is not aware of any reference discussing the invariance of $\mathcal{Q}$, or its associated quadrics, however.
\begin{proof}
	By a suitable choice of coordinates, we can put any flat realisation of a St\"ackel class~$\mathcal{S}$ into its normal form [Em]; we follow the terminology of \cite{Kalnins&Kress&Pogosyan&Miller}.\footnote{We denote the 2D Harmonic Oscillator system by [E3], following \cite{Kalnins&Kress&Pogosyan&Miller}, where however it is not written with all $n+2$ parameters. The full potential is given in~\cite{Kress07}, for example, where the full system is distinguished by a~prime,~[E3$'$].}
	The Hamiltonian $H_m$ of [Em] then has the following form (note that only non-degenerate systems are listed):
	\begin{gather}\begin{split}
&H_1 = p_xp_y + a_3 xy+\tfrac{a_2}{(x+y)^2}+\tfrac{a_1}{(x-y)^2} +a_0, \\
&H_2 = p_xp_y + a_3 \big(4(x+y)^2-(x-y)^2\big)
			+a_2 (x+y) +\tfrac{a_1}{(x-y)^2} +a_0, \\
&H_3 = p_xp_y +a_3 xy +a_2x +a_1y +a_0, \\
&H_7 = p_xp_y + a_3 xy			+ a_2 \tfrac{y}{\sqrt{y^2-c^2}}
			+ a_1 \tfrac{x}{\sqrt{y^2-c^2} (y+\sqrt{y^2-c^2})^2}	+ a_0, \\
&H_8 = p_xp_y + \tfrac{a_3x}{y^3}			+\tfrac{a_2}{y^2}	+a_1 xy +a_0, \\
&H_9 = p_xp_y			+\tfrac{a_3}{\sqrt{y}}
			+a_2 (x+y)+\tfrac{a_1(x+3y)}{\sqrt{y}}+a_0, \\
&H_{10} = p_xp_y			a_3y +a_2 \big(x-\tfrac32 y^2\big)
			+a_1 \big(xy-\tfrac12 y^3\big) +a_0, \\
&H_{11} = p_xp_y
			+a_3x +\tfrac{a_2x}{\sqrt{y}}
			+\tfrac{a_1}{\sqrt{y}} +a_0, \\
&H_{16} = p_xp_y
			+\tfrac1{\sqrt{xy}} \left(
			a_3 +\tfrac{a_2}{\tfrac12(x+y)+\sqrt{xy}}
			+\tfrac{a_1}{\tfrac12(x+y)-\sqrt{xy}}
			\right) +a_0, \\
&H_{17} = p_xp_y
			+\tfrac{a_3}{\sqrt{xy}}
			+\tfrac{a_2}{x^2}
			+\tfrac{a_1}{x\sqrt{xy}} +a_0, \\
&H_{19} = p_xp_y
			+\tfrac{a_3y}{\sqrt{y^2-4}}
			+\tfrac{a_2}{\sqrt{x(y+2)}}
			+\tfrac{a_1}{\sqrt{x(y-2)}}
			+a_0, \\
&H_{20} = p_xp_y
			+\tfrac{1}{\sqrt{xy}} \left(
			a_3 +a_2 \sqrt{\tfrac12(x+y)+\sqrt{xy}}
			+a_1 \sqrt{\tfrac12(x+y)-\sqrt{xy}}
			\right) + a_0.\end{split}\label{eq:hamiltonians.Hm}
	\end{gather}
	For each of these, the position-dependent quadric $Q=Q_{(x,y)}(a_3,a_2,a_1,a_0)$ defining $\mathcal{Q}$ is polynomial in the coordinates or at least a sum of almost everywhere linearly independent functions $f_j(x,y)$.
	Thus, $Q=\sum_j\beta_jf_j(x,y)$ is a polynomial with constant coefficients $\beta_j$ depending quadratically on the parameters of $V^{\mathcal{U}}$. This proves the claim.
\end{proof}

With Observation~\ref{obs:main} and Lemma~\ref{la:F.is.intersect.Q} at hand, we are now ready to prove the main result. Theorem~\ref{thm:classes} below provides a tool for determining the St\"ackel class of any non-degenerate 2D second-order superintegrable system.

The resulting technique is quite efficient (see examples below). Its proof, however, is rather simple. In this respect it has to be acknowledged that the second-order superintegrable systems in 2D are already classified \cite{Kalnins&Kress&Pogosyan&Miller} and that we already know the complete list of St\"ackel classes appearing for 2D non-degenerate systems \cite{Kress07}, too. As a result, we can simply check \eqref{eq:quadric.equation} for a subset of the known normal forms of systems on flat space. Of course, this simplicity of the proof is no impediment to the strength or scope of the resulting method.\medskip

We are now going to see that two different St\"ackel classes give rise to different spaces~$\mathcal{F}$. Therefore, from the knowledge of~$\mathcal{F}$, we can infer the St\"ackel class of a non-degenerate second-order superintegrable system.
As an explicit example, take the Hamiltonian $H_8 = p_xp_y + V[a_3,a_2,a_1,a_0]$ from~\cite{Kalnins&Kress&Pogosyan&Miller} (see the list~\eqref{eq:hamiltonians.Hm} above). Here, $V^\mathcal{U}=V[a_3,a_2,a_1,a_0]$ is a~concrete parametrisation. Since we work projectively, the potential is to be considered modulo multiplication by an irrelevant constant, which we denote as $V[a_3:a_2:a_1:a_0]$.
We find
\begin{align*}
	\mathcal{Q}
	&= \left\{ V[a_3:a_2:a_1:a_0] {\rm d}x{\rm d}y\colon
	\frac{\big(a_0a_3 y^6+3a_2a_3 y^4-3a_0a_1 y^2-a_1a_2\big)y^3}
	{a_3xy^4+a_0y^3+a_1x+a_2y} = 0 \right\} \\
	\intertext{and thus}
	\mathcal{F}
	&= \{ V[a_3:a_2:a_1:a_0] {\rm d}x{\rm d}y \colon
	a_0a_3=0,\, a_2a_3=0,\, a_0a_1=0,\, a_1a_2=0 \},
\end{align*}
wherever $a_3xy^4+a_0y^3+a_1x+a_2y\ne0$.
One therefore obtains a space with two distinct components,
\[
\mathcal{F}
= \{ V[a_3:a_2:a_1:a_0] {\rm d}x{\rm d}y\colon a_2=0=a_0 \}
\cup \{ V[a_3:a_2:a_1:a_0] {\rm d}x{\rm d}y\colon a_1=0=a_3 \}
\]
($x\ne0$). Table~\ref{tab:classes} summarises the results of the analogous computations for all metrics in the list~\eqref{eq:hamiltonians.Hm}.

\begin{table}[t]\centering
	\caption{The table details the non-degenerate St\"ackel equivalence classes in dimension 2, their flat realisations and the associated quadric $\mathcal{F}$. Each row stands for a St\"ackel class~$\mathcal{S}$, specified in the first column and labeled as in \cite{Kress07}. The second column specifies, within this class, its flat realisations $h\in\mathcal{S}$ up to isometries (i.e., up to coordinate changes). These realisations~$h$ are denoted with labels as in \cite{Kalnins&Kress&Pogosyan&Miller}, cf.\ also the above list. The third column describes the variety $\mathcal{F}$ for the respective St\"ackel class; note that it can be parametrised in various equivalent ways, but the varieties themselves are invariant and characteristic to each class.}\label{tab:classes}
	\vspace{1mm}

	\begin{tabular}{l|ll}
		\toprule
{St\"ackel class $\mathcal{S}$}
		& {Flat systems in $\mathcal{S}$}
		& {Description of $\mathcal{F}(\mathcal{S})$}\\
		up to St\"ackel transform
		& up to isometries
		& as a variety
		\\ \midrule
		(0,11)
		& E3, E11, E20
		& quadric surface $uv=a^2+b^2$
		\\
		(21,0)
		& E7, E8, E17, E19
		& two projective lines
		\\
		(3,11)
		& E9, E10
		& one projective line
		\\
		(21,2)
		& E1, E16 
		& two projective points
		\\
		(3,2)
		& E2 
		& one projective point
		\\
		(111,11)
		& none 
		& empty
		\\ \bottomrule
	\end{tabular}
\end{table}

\begin{Theorem}\label{thm:classes}
	The variety $\mathcal{F}$ determines the St\"ackel class of a second-order non-degenerate superintegrable system in dimension~$2$ unambiguously.
\end{Theorem}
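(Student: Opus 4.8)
The plan is to leverage the fact that both the classification of non-degenerate second-order superintegrable systems in dimension~2 and the list of their St\"ackel classes are already known. There are exactly six such St\"ackel classes (cf.\ Table~\ref{tab:classes} and \cite{Kress07}), and all but one of them possess flat realisations; the pertinent flat Hamiltonians are collected in~\eqref{eq:hamiltonians.Hm}. It therefore suffices to show that the invariant variety $\mathcal{F}$ takes six pairwise-distinct values, one on each class.

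First, I would record invariance: by Observation~\ref{obs:main} the linear space $\mathcal{V}=\mathcal{U}g$ is preserved by St\"ackel transformations, hence so are the projective space $\mathcal{W}$ and the subvariety $\mathcal{Q}$. By Lemma~\ref{la:F.is.intersect.Q} we have $\mathcal{F}(\mathcal{S})\cong\bigcap_N\mathcal{Q}$ over a neighborhood $N\subset M$, so $\mathcal{F}(\mathcal{S})$ depends only on the St\"ackel class $\mathcal{S}$ and not on the representative system chosen. Consequently it is legitimate to compute $\mathcal{F}$ on the most convenient representative of each class, namely a flat realisation from~\eqref{eq:hamiltonians.Hm}.

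Second, the computation: for each $H_m$ I would put the (flat) metric into the normal form $g={\rm d}x\,{\rm d}y$ and substitute $V^{\mathcal{U}}=V[a_3:a_2:a_1:a_0]$ into the quadric equation~\eqref{eq:quadric.equation}. This produces $\mathcal{Q}$ as a single position-dependent quadric $Q=\sum_j\beta_jf_j(x,y)=0$, whose coefficients $\beta_j$ are quadratic forms in $(a_3,a_2,a_1,a_0)$ and whose functions $f_j$ are almost-everywhere linearly independent; by Lemma~\ref{la:F.is.intersect.Q} the variety $\mathcal{F}$ is then cut out in $\mathbb{P}^3$ by the simultaneous vanishing of all the $\beta_j$. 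The case $H_8$ is worked out in the text, and carrying out the analogous elementary computation for each of $H_1,\dots,H_{20}$ yields precisely the entries in the third column of Table~\ref{tab:classes}. As a consistency check (automatic from the invariance above) all flat realisations listed in one row of the table do give isomorphic varieties $\mathcal{F}$.

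Finally, I would inspect the six resulting varieties: a smooth quadric surface in $\mathbb{P}^3$ for $(0,11)$, a union of two projective lines for $(21,0)$, a single projective line for $(3,11)$, two projective points for $(21,2)$, a single projective point for $(3,2)$, and the empty variety for $(111,11)$. These are pairwise non-isomorphic as projective varieties -- they are already distinguished by dimension together with the number of irreducible components -- so the assignment $\mathcal{S}\mapsto\mathcal{F}(\mathcal{S})$ is injective. Since every non-degenerate second-order superintegrable system in dimension~2 belongs to one of these six classes, knowledge of $\mathcal{F}$ determines the St\"ackel class unambiguously. The only laborious ingredient is the case-by-case evaluation of~\eqref{eq:quadric.equation} behind Table~\ref{tab:classes}, together with the attendant check that the functions $f_j$ are linearly independent so that passing from the quadric to the vanishing of its coefficients is valid; the structural part of the argument is immediate from Observation~\ref{obs:main} and Lemma~\ref{la:F.is.intersect.Q}.
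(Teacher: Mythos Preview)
Your proposal is correct and follows essentially the same route as the paper: invoke the invariance of $\mathcal{F}$ (Observation~\ref{obs:main} and Lemma~\ref{la:F.is.intersect.Q}), use the known classification so that Table~\ref{tab:classes} exhausts all St\"ackel classes, and then observe that the six varieties listed there are pairwise non-isomorphic. The paper's own proof is merely a terser version of exactly this, citing that every 2D second-order superintegrable system is St\"ackel equivalent to one on a constant-curvature space and then appealing to Table~\ref{tab:classes}; the only step you leave slightly implicit is that $\mathcal{F}=\varnothing$ for $(111,11)$ follows directly from the absence of flat realisations in that class (cf.~\cite{Kress07}) rather than from evaluating~\eqref{eq:quadric.equation} on a flat $H_m$.
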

\begin{proof}Any second-order superintegrable system in 2D is St\"ackel equivalent to a superintegrable system on a constant curvature space \cite{Capel_phdthesis,Kalnins&Kress&Miller-II}. Therefore Table~\ref{tab:classes} covers all cases and we immediately infer the asserted statement.
\end{proof}

With the help of this theorem, we can easily determine the St\"ackel class for a 2-dimensional non-degenerate second-order superintegrable system. We illustrate this with two examples of non-zero curvature.

\begin{Example}
As a first example we consider the system [S4], defined on the 2-sphere with the metric~\eqref{eqn:S4.metric} and the Hamiltonian~\eqref{eqn:S4.hamilton}. Using~\eqref{eqn:S4.W}, we find that
\[
	\mathcal{F} = \left\{
	\left(
	\frac{a_1}{y}
	-\frac{a_3}{\sqrt{xy}}
	\right)\frac{{\rm d}x{\rm d}y}{y}\colon \text{either $a_1=0$ or $a_3=0$, but not both}
	\right\}/{\sim},
\]
where we require $xy+4\ne0$, and either $y\ne0$ or $x\ne0$, respectively.
	In other words, $\mathcal{F}$ consists of two distinct projective points.
	According to Table~\ref{tab:classes}, this is the St\"ackel class~(21,2). Indeed, the system~[S4] -- defined on the 2-sphere -- is St\"ackel equivalent to the systems~[E1] and~[E16], i.e., the Hamiltonians $H_1$ and $H_{16}$ from~\eqref{eq:hamiltonians.Hm}, defined on flat 2-space~\cite{Kress07}.
\end{Example}

\begin{Example}\label{ex:matveev}
	In \cite{Vollmer_2020}, see also \cite{MV2019,MV2020}, it is proven that any non-degenerate second-order superintegrable system that admits one unique, essential projective symmetry\footnote{A vector field is called projective if its local flow maps geodesics into geodesics if we disregard parametrisation. It is called essential if it is not a homothety. The word ``unique'' here means that the projective symmetry algebra is 1-dimensional.}
	is projectively equivalent to\footnote{Note that \eqref{eqn:matveev.g} itself has a 1-dimensional projective algebra, but its projective vector fields are homothetic. Nonetheless, any 2D second-order superintegrable metric with a unique, essential projective symmetry is projectively equivalent to~\eqref{eqn:matveev.g}.}
	the Hamiltonian
	\begin{equation}\label{eqn:matveev.H}
		H = \frac{p_xp_y}{\big(x+y^2\big)}
		- a_3 \frac{y\big(y^2+3x\big)}{x+y^2}
		+ a_2 \frac{y}{x+y^2}
		+ \frac{a_1}{x+y^2} +a_0 .
	\end{equation}
	The underlying metric is
	\begin{equation}\label{eqn:matveev.g}
		g = \big(x+y^2\big){\rm d}x{\rm d}y ,
	\end{equation}
	see for instance \cite{bolsinov_2009,dini_1869, matveev_2012}.
	We find
	\begin{equation}\label{eqn:matveev}
		\mathcal{F} = \{ (a_1+a_2y) {\rm d}x{\rm d}y \ne 0 \}/{\sim},
	\end{equation}
	where $y\ne0$.
	Therefore, we have confirmed that $\mathcal{F}$ is a projective line, and thus $H=g^{ij}p_ip_j+V$ belongs to the St\"ackel class~(3,11).
\end{Example}

\looseness=-1 We conclude this section with a closer look at the inner structure of the invariant variety~$\mathcal{F}$.
By definition, the invariant variety $\mathcal{F}$ is the space of all flat realisations of a given St\"ackel class. While before we were using~$\mathcal{F}$ in order to characterise the St\"ackel class, we shall now take a different stance: We aim to understand the internal structure of~$\mathcal{F}$ as a space of flat superintegrable systems. In other words, for each row of Table~\ref{tab:classes}, i.e., for each St\"ackel class, we ask how the individual flat systems listed in the second column are ``placed'' within the variety~$\mathcal{F}$.

The strategy for this goal is rather simple: We compute $\mathcal{U}g$ for one realisation $H=g^{ij}p_ip_j+V^{\mathcal{U}}$ from each row in Table~\ref{tab:classes}. Due to the invariance of~$\mathcal{F}$ it does not matter which actual realisation we select for the computation.
For convenience, we may chose coordinates $(x,y)$ such that $g={\rm d}x{\rm d}y$. Then the isometry operations are $x\to\lambda x+a_1$ and $y\to\frac{y}{\lambda}+a_2$; the constants~$a_1$,~$a_2$ and $\lambda\ne0$ have to be adjusted such that we arrive at the suitable normal form from~\cite{Kalnins&Kress&Pogosyan&Miller}.

\begin{Example}\label{ex:class.3,11}
	For the class (3,11) we can use~\eqref{eqn:matveev}. Since either $a_1$ or $a_3$ has to be non-zero, we arrive at two distinct cases:
	\begin{enumerate}\itemsep=0pt
		\item Case 1: $a_3=0$, but $a_1\ne0$, i.e., w.l.o.g.\ $a_1=1$. Together with $a_2=0=a_4$ we arrive at the St\"ackel transform with conformal factor $U=\frac{1}{x+y^2}$. The transformed Hamiltonian is therefore
		\begin{align*}
			H' &= \frac{H}{U} = p_xp_y +a_1
			- \tfrac{a_2}{2} y \big(y^2-3x\big) + a_3y + a_4 \big(y^2+x\big) \\
			&= p_{x'}p_{y'} + a_1'y' + a_2' \big(x'-\tfrac32(y')^2\big)
			+ a_3 y' \big(x'-\tfrac12(y')^2\big) + a_4' ,
		\end{align*}
		after a suitable change of coordinates from $(x,y)$ to $(x',y')$, and redefinition of the parameters from the $a_i$ to the $a_i'$.
		This is the system [E10] of~\cite{Kalnins&Kress&Pogosyan&Miller}.
		\item Case 2: $a_3\ne0$. In an analogous way, one arrives to the system [E9] of~\cite{Kalnins&Kress&Pogosyan&Miller}.
	\end{enumerate}
\end{Example}

Continuing the procedure as in Example~\ref{ex:class.3,11} for the other St\"ackel classes, we find the following.

\begin{Proposition}\label{prop:inner.structure}
	For the class $(111,11)$, $\mathcal{F}$ is empty, $\mathcal{F}=\varnothing$, i.e., no flat realisations exist. For the other St\"ackel classes, we have:	
	\begin{description}
		\item[Class (3,2).]\itemsep=0pt
		The variety $\mathcal{F}$ consists of one projective point, corresponding to the system~{\rm [E2]} of~{\rm \cite{Kalnins&Kress&Pogosyan&Miller}}.
		\item[Class (21,2).]
		The variety $\mathcal{F}$ consists of two projective points, one corresponding to {\rm [E1]}, the other to~{\rm [E16]}.
		\item[Class (3,11).]
		This is the first non-trivial case. $\mathcal{F}$ consists of one projective line, which generically is of type~{\rm [E9]}. The system~{\rm [E10]} corresponds to a projective point lying within this projective line.
		\item[Class (21,0).]
		Two disjoined projective lines are contained in $\mathcal{F}$. One line is generically~{\rm [E19]}, containing one point that is~{\rm [E17]}. The other line is {\rm [E7]} generically and contains a point that is~{\rm [E8]}.
		\item[Class (0,11).]
		The most interesting variety contains the Harmonic Oscillator and is governed by the position-independent quadric $a^2+b^2=uv$ for $a,b,u,v\in\mathbb{C}$. In the quadric defining $\mathcal{Q}$ the position dependent contributions factor out. Somewhat surprisingly, the system~{\rm [E20]} is realised generically, when $\mathcal{F}$ is described by $v=\frac{a^2+b^2}{u}$ $(u\ne0)$. The system~{\rm [E11]} is realised if $u=0$, $a\ne0$ and $b\ne0$. The quadric is $a^2+b^2=(a+{\rm i}b)(a-{\rm i}b)=0$. The projective point with $u=a=b=0$ realises~{\rm [E3]}.
	\end{description}
\end{Proposition}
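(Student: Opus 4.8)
The variety $\mathcal{F}$ itself is, for every class, already recorded in Table~\ref{tab:classes} (via Theorem~\ref{thm:classes} and Lemma~\ref{la:F.is.intersect.Q}); the only new content of the proposition is the \emph{placement} of the individual flat normal forms $[\mathrm{Em}]$ of~\eqref{eq:hamiltonians.Hm} inside these varieties. For the class $(111,11)$ there is nothing to do: Table~\ref{tab:classes} states that it admits no flat realisations, hence $\mathcal{F}=\varnothing$. For each of the five remaining classes the plan is to run, once, exactly the computation already carried out for $(3,11)$ in Example~\ref{ex:class.3,11}.

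Concretely, I would fix one flat representative $H=g^{ij}p_ip_j+V^{\mathcal{U}}$ from~\eqref{eq:hamiltonians.Hm} --- by invariance of $\mathcal{F}$ (Observation~\ref{obs:main}) the choice is immaterial --- and rescale coordinates so that $g=\mathrm{d}x\,\mathrm{d}y$. Then $\mathcal{V}=\mathcal{U}g$, $\mathcal{W}$, and $\mathcal{Q}$ are computed directly, and intersecting $\mathcal{Q}$ over a neighbourhood returns the explicit $\mathcal{F}$ of that row. Every point $q\in\mathcal{F}$ is, up to a non-zero constant, of the form $q=Ug$ with $U\in\mathcal{U}$; the St\"ackel transform of Definition~\ref{def:staeckel} with conformal factor $U$ then produces a flat superintegrable Hamiltonian $\tilde H=U^{-1}H$ whose underlying metric is (a constant multiple of) $q$. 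Applying the residual isometry freedom $x\to\lambda x+b_1$, $y\to\lambda^{-1}y+b_2$ together with a redefinition of the coupling constants brings $\tilde H$ into one of the normal forms $[\mathrm{Em}]$; reading off which normal form occurs at a generic point of each irreducible component of $\mathcal{F}$, and which occurs on the proper subvarieties where the parametrisation degenerates or extra coupling constants vanish, yields the asserted assignments.

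For $(3,2)$ and $(21,2)$ this is a direct check, since $\mathcal{F}$ is a single projective point, respectively two points, each carrying one normal form (namely $[\mathrm{E2}]$, respectively $[\mathrm{E1}]$ and $[\mathrm{E16}]$). For $(3,11)$ the computation is precisely Example~\ref{ex:class.3,11}: the line is generically $[\mathrm{E9}]$ and contains the distinguished point $[\mathrm{E10}]$ (where $a_3=0$). For $(21,0)$ one repeats the argument on each of the two disjoint projective lines, obtaining $[\mathrm{E19}]$ generically with the point $[\mathrm{E17}]$ on it, and $[\mathrm{E7}]$ generically with the point $[\mathrm{E8}]$ on it.

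The genuinely delicate case is $(0,11)$, where the quadric defining $\mathcal{Q}$ turns out to be \emph{position-independent} (the coordinate-dependent factors in~\eqref{eq:quadric.equation} cancel), so $\mathcal{F}$ is the whole quadric surface $a^2+b^2=uv$ in $\mathbb{P}^3$ and one must describe its stratification by normal form. Starting from the Harmonic Oscillator $[\mathrm{E3}']$, I would: (i) for a generic point take $u\neq0$, set $v=(a^2+b^2)/u$, perform the associated St\"ackel transform, and identify the result --- after the square-root change of coordinates characteristic of that system --- as $[\mathrm{E20}]$; (ii) on the locus $u=0$, $a\neq0\neq b$, where the quadric factors as $(a+\mathrm{i}b)(a-\mathrm{i}b)=0$, obtain $[\mathrm{E11}]$; (iii) at $u=a=b=0$ recover $[\mathrm{E3}]$ itself. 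The main obstacle throughout is bookkeeping rather than anything conceptual: tracking the coordinate changes and coupling-constant redefinitions precisely enough that each transformed Hamiltonian is matched unambiguously to a normal form from~\eqref{eq:hamiltonians.Hm}, and in particular handling correctly the square roots appearing in $[\mathrm{E11}]$ and $[\mathrm{E20}]$. As the relevant lists are finite and fully explicit, this is a finite --- if somewhat tedious --- verification.
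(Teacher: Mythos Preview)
Your proposal is correct and follows essentially the same approach as the paper: the paper's argument is precisely the strategy you describe---pick one flat realisation per class, compute the St\"ackel transform $\tilde H=U^{-1}H$ for each point (or stratum) of $\mathcal{F}$, and use the isometry freedom $x\to\lambda x+b_1$, $y\to\lambda^{-1}y+b_2$ to match the result to a normal form from~\eqref{eq:hamiltonians.Hm}, with Example~\ref{ex:class.3,11} serving as the template. The paper then simply states ``Continuing the procedure as in Example~\ref{ex:class.3,11} for the other St\"ackel classes, we find the following'', so your write-up is in fact more detailed than the original.
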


Note that for class (111,11) there exist no flat realisations. There are, however, three distinct non-degenerate systems with constant sectional curvature, called [S7], [S8] and [S9] in~\cite{Kalnins&Kress&Pogosyan&Miller}, which realise (111,11) on the complex 2-sphere \cite{Kress07}. Analogously to $\mathcal{F}$, one could now define a variety $\mathcal{E}=\{q\in\mathcal{W}\colon  \mathrm{Riem}(g)=\text{const}\ne0\}$ and ask for its inner structure analogously to Proposition~\ref{prop:inner.structure}. For instance, for the class (111,11), one obtains a variety $\mathcal{E}$ whose complete primary decomposition leads to 11 components. A~further study of these components does however not appear to reveal any new, interesting facts relating to the purposes of the present paper. Hence we refrain from any further analysis of~$\mathcal{E}$.

\section{Discussion and generalisations}

\subsection{Comparison with established method}

It is worthwhile contrasting Theorem~\ref{thm:classes} with the method established by \cite{Kress07} that appears to be the only preexisting technique for this task available for immediate use.
Essentially, one has to complete the following steps:
\begin{enumerate}\itemsep=0pt
	\item Compute functionally independent integrals $F_1$, $F_2$ explicitly for the Hamiltonian~$H$.
	\item Compute their Poisson bracket $R=\{F_1,F_2\}$.
	\item Rewrite $R^2$ (a~sextic polynomial w.r.t.\ momenta) as a cubic in $H$, $F_1$, $F_2$.
	\item Bring this cubic into normal form, cf.~\cite{Kress07}.
\end{enumerate}
Note that there exist different choices for $F_1$ and $F_2$, which lead to different but proportional results for $R$. This ambiguity is resolved only in the last step by reverting to the normal forms.
It is instructive to look at an example:
\begin{Example}
	We reconsider the system of Example~\ref{ex:matveev}. Its Hamiltonian \eqref{eqn:matveev.H} is defined on a~mani\-fold of non-constant sectional curvature and its potential is non-degenerate as it has four linear parameters with independent functions as coefficients.
	We follow the four aforementioned steps:
\begin{enumerate}\itemsep=0pt
\item[(i)] The Hamiltonian \eqref{eqn:matveev.H} has the advantage that we do not have to integrate for the integrals of motion. In fact, two metrics that are projectively equivalent to \eqref{eqn:matveev.g} are given in \cite{matveev_2012}, and from these we can obtain the following two integrals of motion by using the formula from \cite{topalov_2003}, see also~\cite{Vollmer_2020},
	\begin{gather*}
		F_1 =
		\left( p_x^2 - \frac{2y p_xp_y}{y^2+x} \right)
		- \frac{a_3 y^2\big(y^2-3x\big)}{y^2+x}
		+ \frac{2a_2 y^2}{y^2+x} + \frac{2a_1 y}{y^2+x} + a_0,	\\
		F_2 =
		\left( 12x p_x^2 -4y \frac{y^2+9x}{y^2+x} p_xp_y +9 p_y^2 \right)
		\\
\hphantom{F_2 =}{}
		+\frac{a_3 \big(y^2-3x\big)^3}{y^2+x}
		+ \frac{2a_2 \big(y^2-3x\big)^2}{y^2+x}
		+ \left(\frac{8 y\big(y^2-3x\big)}{y^2+x} -8\right) a_1 + a_0.
	\end{gather*}
\item[(ii)] We compute
	\begin{gather*}
		R = -12p_x \left(
		2 p_x^2 +\frac{3 p_y^2}{x+y^2} -\frac{6y p_xp_y}{x+y^2}
		\right)	\\
\hphantom{R =}{}
		+36p_x \frac{
			a_3\big(y^4+3x^2\big)
			+a_2 \big(y^2-x\big)
			+2a_1y}{x+y^2}		-36p_y \frac{2a_3y^3 +a_2y +a_1}{x+y^2}.
	\end{gather*}
\item[(iii)] Its square $R^2$ can be written as
	\begin{gather*}
		R^2 = 576 F_1^3
		+ 1728 (a_0+a_2) F_1^2 -432 a3 F_1F_2
		\\
\hphantom{R^2 =}{} +\big( {-}1728a_1 H
		+ 432(4a_0^2+4(a_0-2a_3)a_1+8a_0a_2+3a_2^2+a_0a_3)
		\big) F_1
		\\
\hphantom{R^2 =}{}  +\big( 144 H^2 -288a_0 H
		+ 144\big(a_0^2 - 3a_0a_3\big)\big) F_2	- 144 (a_0-8a_1)H^2
		\\
\hphantom{R^2 =}{}  + 288 (a_0^2 -(14a_0 +9a_2)a_1)H
		+ 288 \big(10a_0^2 +9a_0a_2 -12a_0a_3\big)a_1		\\
\hphantom{R^2 =}{} +432 a_0^3 +1728a_0^2a_2 +1296a_0a_2^2		+ 432a_0^2a_3 +3888a_1^2a_3.
	\end{gather*}
\item[(iv)] After a linear redefinition $F_1'=\sqrt[3]{576} (F_1+a_0+a_2)$ and a rescaling $F_2'=-36\sqrt[3]{3}F_2$, we find
	\[
	R^2 = (F_1')^3 + a_3 F_1'F_2' + \mathcal{O} ,
	\]
where $\mathcal{O}$ is at most linear in $(F_1',F_2')$. Therefore, according to~\cite{Kress07}, the system is of type~(3,11). This is consistent with what we found in Example~\ref{ex:matveev}.
\end{enumerate}
\end{Example}

Note that we have needed the explicit expressions for the integrals, and for their Poisson bracket.
This makes the procedure typically computationally more intensive than the one presented in Theorem~\ref{thm:classes}.

\subsection{Conformal superintegrability}\label{sec:conformal}

In the remainder of this section we comment on two possible generalisations of Theorem~\ref{thm:classes}. The first is the extension of our results to conformal superintegrability, the other one to dimensions higher than~2.

Let us begin with a remark on conformal superintegrability. In this case, integrals of motion~\eqref{eq:integral} are replaced by conformal integrals:
A \emph{second-order conformal integral} is a function $F(x,p)=K^{ij}(x)p_ip_j+W(x)$ such that
\begin{equation}\label{eq:conformal.integral}
	\{F,H\} = \omega H
\end{equation}
holds, for some polynomial $\omega=\omega^i(x)p_i$ linear in momenta. Obviously, every integral of $H$ -- satisfying \eqref{eq:integral}~-- is also a conformal integral for $H$, with $\omega=0$.
With \eqref{eq:conformal.integral} instead of \eqref{eq:integral}, the components $K_{ij}$ are now the components of a conformal Killing tensor. As a consequence, instead of~\eqref{eq:dKdV}, we obtain the equation \cite{Kalnins&Kress&Miller-II,Kalnins&Kress&Miller&Post11,KSV_2020}
\[
\nabla_{[i}K\indices{_{j]}^a}\nabla_aV
= K\indices{^a_{[i}}\nabla^2_{j]a}V
+\omega_{[i}V_{,j]} + \omega_{[i,j]}V .
\]
Conformal \looseness=-1 superintegrability can then be defined analogously to proper superintegrability, but we require functionally independent \emph{conformal} integrals~\eqref{eq:conformal.integral} instead of proper ones, i.e., instead of~\eqref{eq:integral}.
Non-degeneracy is then also defined analogously, but one thing should be noted: For non-degenerate properly superintegrable systems, we have $V^\mathcal{U}=a_0+\sum_{k=1}^{n+1} a_kf_k$ for $(n+1)$-many functions $f_k$. In conformal systems, however, we usually have $n+2$ non-constant functions~$f_k$, $V^\mathcal{U}=\sum_{k=0}^{n+1}a_kf_k$. In both cases, the admissible potentials form a $(n+2)$-dimensional linear space~$\mathcal{U}$.
Finally, the \emph{conformal equivalence} of conformally superintegrable systems is defined similar to St\"ackel equivalence, see, e.g.,~\cite{BKM1986,Kalnins&Kress&Miller-II, KSV_2020}.
For our purposes, it is enough to allow, in~\eqref{eq:staeckel}, arbitrary functions~$U$, which not necessarily are compatible potentials of~$H$. In other words, we allow $U\not\in\mathcal{U}$ in addition.
With these generalisations of superintegrability at hand, it is easy to extend Theorem~\ref{thm:classes} to non-degenerate second-order 2D \emph{conformally} superintegrable systems.

\begin{Theorem}\label{thm:conformal}
The variety $\mathcal{F}$ determines the St\"ackel class of a second-order non-degenerate conformally superintegrable system in dimension~$2$ unambiguously.
\end{Theorem}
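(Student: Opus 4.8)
The plan is to follow the proof of Theorem~\ref{thm:classes} almost verbatim, with Stäckel equivalence replaced throughout by conformal equivalence; the point is that all the ingredients of Section~3 survive this replacement. First I would verify the conformal analogue of Observation~\ref{obs:main}. If $H=g^{ij}p_ip_j+V$ and $\tilde H=\tilde g^{ij}p_ip_j+\tilde V$ are conformally equivalent, i.e.\ related as in~\eqref{eq:staeckel} but now with an \emph{arbitrary} function $U$ in place of a compatible potential, then reading off the leading (momentum-quadratic) part gives $\tilde g=Ug$, while the conformal Bertrand--Darboux equation of Section~\ref{sec:conformal} transforms so that its solution space obeys $\tilde{\mathcal U}=\mathcal U/U$. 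Hence $\tilde{\mathcal U}\tilde g=\tfrac{\mathcal U}{U}\,Ug=\mathcal U g$, so $\mathcal V=\mathcal U g$ --- and therefore $\mathcal W$, $\mathcal Q$ and $\mathcal F$ --- are invariants of the conformal class. The defining equation~\eqref{eq:quadric.equation} of $\mathcal Q$ is untouched: it is again one homogeneous quadratic equation in the $n+2=4$ parameters of $V^{\mathcal U}$ with position-dependent coefficients, the sole difference being the one recorded in Section~\ref{sec:conformal}, namely that $V^{\mathcal U}$ may now carry four non-constant coefficient functions instead of three plus a constant.

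Next I would establish the conformal analogue of Lemma~\ref{la:F.is.intersect.Q}, which goes through unchanged: once a flat representative of the conformal class is chosen, the coefficients of~\eqref{eq:quadric.equation} are polynomial in the coordinates --- or at worst an almost-everywhere linearly independent system of functions --- so intersecting $\mathcal Q$ over a neighbourhood $N$ retains precisely the position-independent (quadratic) relations among the parameters of $V^{\mathcal U}$, giving $\mathcal F(\mathcal S)\cong\bigcap_N\mathcal Q$. The key reduction is then that in dimension~$2$ every metric is, locally over $\mathbb C$, conformally flat (write $g=\phi^2\,{\rm d}x\,{\rm d}y$ and rescale by $U=\phi^{-2}$), so \emph{every} non-degenerate second-order conformally superintegrable system is conformally equivalent to one on flat $2$-space; equivalently, every conformal Stäckel class admits a flat realisation. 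Since the non-degenerate second-order conformally superintegrable systems on flat $2$-space are classified \cite{Kalnins&Kress&Miller-II,KSV_2020}, it then remains to compute the quadric~\eqref{eq:quadric.equation} for each flat normal form and read off $\mathcal F$, exactly as was done for Table~\ref{tab:classes}; the invariance established above guarantees that the answer is characteristic of the conformal Stäckel class, and the reduction guarantees that this list of normal forms is exhaustive.

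I expect the only real work --- and hence the main obstacle --- to be the bookkeeping in this last step: one must confirm that passing from the proper to the conformal category does not merge two rows of Table~\ref{tab:classes} into a single conformal class without $\mathcal F$ noticing, nor split a row in a way $\mathcal F$ cannot see. Concretely, one compares the classification of \cite{Kalnins&Kress&Miller-II,KSV_2020} with the normal forms~\eqref{eq:hamiltonians.Hm}, now read with the full four-parameter potential, and checks the resulting quadrics. Because the Riemann-flatness condition~\eqref{eq:quadric.equation} is insensitive to whether a given parameter multiplies a constant or a non-constant coefficient function --- it depends only on the almost-everywhere linear independence of the $f_j(x,y)$ occurring in $Q=\sum_j\beta_j f_j(x,y)$ --- no new phenomenon arises, and the verification is a finite, case-by-case computation strictly parallel to the one summarised in Table~\ref{tab:classes}.
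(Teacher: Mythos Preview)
Your approach is correct and shares the overall architecture of the paper's proof: first extend Observation~\ref{obs:main} to arbitrary conformal factors $U$ (so that $\mathcal{W}$, $\mathcal{Q}$, $\mathcal{F}$ remain invariants of the conformal class), then reduce to an already-classified situation. The difference lies in the reduction step. You invoke local conformal flatness of $2$-dimensional metrics to land on flat space, and then propose to match against a classification of non-degenerate \emph{conformally} superintegrable systems on flat $2$-space, which forces the residual bookkeeping you describe in your third paragraph (ensuring no rows of Table~\ref{tab:classes} merge or split). The paper takes a shorter route: it quotes \cite{Capel_phdthesis} for the stronger fact that every conformally superintegrable system is St\"ackel equivalent to a \emph{properly} superintegrable one. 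This means no new St\"ackel classes arise in the conformal setting, so Theorem~\ref{thm:classes} applies verbatim and no further case-by-case verification is needed. Your argument buys a slightly more self-contained treatment (you do not need the reduction to proper systems from \cite{Capel_phdthesis}), at the cost of the extra check; the paper's argument is terser but relies on that external reduction.
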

\begin{proof}It is shown in \cite{Capel_phdthesis} that any conformally superintegrable system is St\"ackel equivalent to a~properly superintegrable system, i.e., no new St\"ackel classes can appear.
	The assertion is then easily confirmed by following the individual steps in the proof of Theorem~\ref{thm:classes}.
	The key point in this respect is that Observation~\ref{obs:main} continues to hold true if~$U\not\in\mathcal{U}$.
\end{proof}

\subsection{Higher dimension}

The current paper focuses on dimension $n=2$. In dimensions $n>2$ it is still possible, by the same reasoning, to construct the invariant $\mathcal{Q}\subset\mathcal{W}$, but it is generally not described by a~single quadric.
Also, in higher dimension it is not possible to use $\mathcal{F}$ in the same way as in 2D for a~characterisation of the St\"ackel class. In this context, it seems appropriate to mention that also the method of~\cite{Kress07} is restricted to dimension~2.
Indeed, one quickly finds that $\mathcal{Q}$ cannot be used to identify the St\"ackel class in dimension~3 already. In the 3D case, the explicit normal forms of second-order non-degenerate superintegrable systems \cite{Capel_phdthesis, KKM06a,KKM07c} facilitate the computation; however, in most cases, $\mathcal{F}$ is just one projective point.
Such ambiguities should be expected in any higher dimension, pointing to fundamental structural particularities of 2D second-order superintegrable systems compared to higher dimensions \cite{Kress&Schoebel,Kress&Schoebel&Vollmer,KSV_2020}.

\subsection*{Acknowledgements}
The author is indebted to Jonathan Kress, Joshua Capel and Konrad Sch\"obel for many discussions on superintegrability, St\"ackel transforms and numerous related topics, and thanks the anonymous referees for contributing valuable suggestions that lead to improvements of this paper. A special thank you goes to Konrad Sch\"obel for helpful comments on the manuscript.

This paper was principally written when the author was a fellow of the German Research Foundation~-- Deutsche Forschungsgemeinschaft (DFG): Andreas Vollmer acknowledges funding from a DFG research fellowship with the project number 353063958 as well as through a subsequent return fellowship. Andreas Vollmer thanks the University of Stuttgart and the University of New South Wales for hospitality.

\vspace{-2mm}

\pdfbookmark[1]{References}{ref}
\LastPageEnding

\end{document}